\newtheorem{definition}{Definition}[section]
\newtheorem{lemma}[definition]{Lemma}
\newtheorem{theorem}[definition]{Theorem}
\newcommand{\Prob}[2]{\mathbf{P}_{#1} \left( #2 \right)}
\newcommand{\Expec}[2]{\mathbf{E}_{#1} \left[ #2 \right]}
\newcommand{\bigO}{\mathcal{O}}
\newcommand{\bx}{\mathbf{x}}
\newcommand{\bz}{\mathbf{z}}
\newcommand{\bone}{\mathbf{1}}
\newcommand{\bzero}{\mathbf{0}}
\renewcommand{\le}{\leqslant}
\renewcommand{\leq}{\leqslant}
\renewcommand{\ge}{\geqslant}
\renewcommand{\geq}{\geqslant}
\newcommand{\better}{dominant}
\newcommand{\HCgraph}{Cycle Binary Configuration Graph}
\title{\textbf{Biased Opinion Dynamics:\\When the Devil Is in the Details}}
\author{
Aris Anagnostopoulos\\
		{\small{}Sapienza Università di Roma}\\
		{\small{}Rome, Italy}\\
		{\small{}\texttt{aris@diag.uniroma1.it}}\\
\and Luca Becchetti\\
		{\small{}Sapienza Università di Roma}\\
		{\small{}Rome, Italy}\\
		{\small{}\texttt{becchetti@diag.uniroma1.it}}\\
\and Emilio Cruciani \\
		{\small{}Inria, I3S Lab, UCA, CNRS}\\
		{\small{}Sophia Antipolis, France}\\
		{\small{}\texttt{emilio.cruciani@inria.fr}}\\
\and Francesco Pasquale\\ 
		{\small{}Università di Roma Tor Vergata}\\
		{\small{}Rome, Italy}\\
		{\small{}\texttt{pasquale@mat.uniroma2.it}} 
\and Sara Rizzo\\
		{\small{}Gran Sasso Science Institute}\\
		{\small{}L'Aquila, Italy}\\
		{\small{}\texttt{sara.rizzo@gssi.it}}
}
\date{}
\begin{document}

\maketitle

\begin{abstract}
We investigate opinion dynamics in multi-agent networks when a bias toward
one of two possible opinions exists; for example, reflecting a status quo vs a
superior alternative.

Starting with all agents sharing an initial opinion representing the status
quo, the system evolves in steps. In each step, one agent selected uniformly at
random adopts the superior opinion with some probability $\alpha$, and with
probability $1 - \alpha$ it follows an underlying update rule to revise its
opinion on the basis of those held by its neighbors.
We analyze convergence of the resulting process under two well-known update 
rules, namely \textit{majority} and \textit{voter}.

The framework we propose exhibits a rich structure, with a non-obvious
interplay between topology and underlying update rule.
For example, for the voter rule we show that the speed of convergence 
bears no significant dependence on the underlying topology, 
whereas the picture changes completely under the majority rule, 
where network density negatively affects convergence.

We believe that the model we propose is at the same time simple, rich, and modular,
affording mathematical characterization of the interplay between bias,
underlying opinion dynamics, and social structure in a unified setting. 

\end{abstract}

\vfill
{\footnotesize Partially supported by the ERC Advanced Grant 788893 AMDROMA ``Algorithmic and Mechanism Design Research in Online Markets'' 
and MIUR PRIN project ALGADIMAR ``Algorithms, Games, and Digital Markets.''}

\newpage
\section{Introduction}
Opinion formation in social groups has been the focus of extensive research.
While many models considered in the literature confer the same {\em intrinsic}
value to all opinions~\cite{coates2018unified}, one might expect a group to
quickly reach consensus on a clearly ``superior'' alternative, if present. Yet,
phenomena such as \emph{groupthink} may delay or even prevent such an outcome.

In this perspective, we investigate models of opinion formation in which a bias
towards one of two possible opinions exists, for instance, reflecting intrinsic
superiority of one alternative over the other.\footnote{Characterizing the
notion of ``superiority'' is typically context-dependent and may be far from
obvious. We remark that this aspect is outside the scope of this paper.} In the
remainder, we use labels $0$ and $1$ for the two opinions and we assume $1$ is
the {\em \better} opinion, that is, the one towards which the agents have a
bias.  We investigate this question in a mathematically tractable setting,
informally described as follows.

\smallskip Assume some underlying opinion dynamics $\mathcal D$.  Starting from
an initial state in which all agents share opinion $0$, the system evolves in
rounds. In each round, one agent is selected uniformly at random. With some
probability $\alpha$, the agent adopts $1$, while with probability $1 -
\alpha$, the agent follows $\mathcal D$ to revise its opinion on the basis of
those held by its neighbors in an underlying network. 

\smallskip Although the general model we consider is simple and, under mild
conditions on $\mathcal D$, the family of processes it describes always admits
global adoption of opinion $1$ as the only absorbing state, convergence to this
absorbing state exhibits a rich variety of behaviors, which in non-obvious ways
depends on the interplay between the network structure and the underlying
opinion dynamics.  The relatively simple, yet general, model we consider allows
analytical investigation of the following question:
\begin{quote}
How does a particular combination of network structure and opinion dynamics
affects convergence to global adoption of the dominant opinion?  In particular,
how conducive is a particular combination to rapid adoption?
\end{quote}

\paragraph{Main findings.} In general, the interplay between underlying network
structure and opinion dynamics may elicit quite different collective behaviors.

In Section~\ref{sec:majority}, we show that the expected time for consensus on
the dominant opinion grows exponentially with the minimum degree under the
\emph{majority} update rule, in which agents update their opinion to the
majority opinion in their neighborhoods~\cite{krapivsky2003dynamics}.  Using
asymptotic notation and denoting the number of agents in the network by $n$, we
obtain that convergence time is super-polynomial in expectation whenever the
minimum degree is $\omega(\log n)$. One might wonder, if the converse occurs,
namely, if a logarithmic maximum degree affords (expected) polynomial
convergence to the absorbing state. Even though we prove that this is indeed
the case for specific topologies as cycles or restricted graph families, this
does not seem to hold in general (see discussion in Section~\ref{se:concl}).

The results for majority are at odds with those we obtain in
Section~\ref{sec:voter} for the \emph{voter model}, where agents copy the
opinion of a randomly selected neighbor~\cite{liggett2012interacting}. In this
case, convergence to the absorbing state occurs within
$\mathcal{O}\left(\frac{1}{\alpha}n\log n\right)$ rounds {\em with high
probability}, regardless of the underlying network structure. We emphasize that
convergence time remains $\mathcal{O}\left(n^{1+s}\log n\right)$ when $\alpha =
\Theta\left(\frac{1}{n^s}\right)$ for any $s > 0$. 

Although results suggesting a negative impact of network density on convergence
time have been proposed in the past, albeit for quite different models
(e.g.,~\cite{montanari2010spread}), the results above suggest that there might
be more to the issue. In particular, the interplay between opinion dynamics
and underlying network structure seems more complex than anticipated, with the
former playing a key role in amplifying network effects. 

\smallskip At a higher level, we provide a simple mathematical framework to
investigate the interplay between opinion dynamics and underlying network
structure in a unified setting, allowing comparison of different update rules
with respect to a common framework. In this respect, we hope that our work
moves in the direction of a shared framework to investigate opinion dynamics,
as advocated in~\cite{coates2018unified}. 

\subsection{Related Work}\label{sec:related}
The problem we consider touches a number of areas where similar settings have
been considered, with various motivations. The corresponding literature is vast
and providing an exhaustive review is unfeasible here. In the paragraphs that
follow, we discuss contributions that most closely relate to the topic of this
paper. 

\paragraph*{Opinion diffusion and consensus.} Opinion dynamics are widely used
to investigate how group of agents modify their beliefs under the influence of
other agents and possibly exogenous factors. A number of models have been
proposed in the more or less recent past, mostly motivated by phenomena that
arise in several areas, ranging from social sciences, to physics and biology.
The reader is referred to~\cite{coates2018unified} and references therein for a
recent, general overview of opinion dynamics in multi-agent systems.  A first
distinction is between settings in which the set of possible beliefs is
continuos, e.g., the interval $[0, 1]$. This setting has been the focus of
extensive research in social sciences and
economics~\cite{degroot1974reaching,friedkin1990social,friedkin2017truth}. 

In this paper, we consider the case in which opinions are drawn from a discrete
set, a setting that also received significant attention in the recent past. In
particular, we focus on the {\em majority rule} and the {\em voter model}.
Investigation of the majority update rule originates from the study of
agreement phenomena in spin systems~\cite{krapivsky2003dynamics}, while the
voter model was motivated by the study of spatial conflict between species in
biology and interacting stochastic processes/particle systems in probability
theory and
statistics~\cite{clifford1973model,holley1975ergodic,liggett2012interacting}.
These two models received renewed attention in the recent past, the focus
mostly being on the time to achieve consensus and/or conditions under which
consensus on one of the initial opinions is achieved with a minimum degree of
confidence.  The voter model is by now well understood. In particular,
increasingly tight bounds on convergence time for general and specific
topologies have been proposed over the recent
past~\cite{hassin1999distributed,cooper2013coalescing}, while it is known that
the probability of one particular opinion to prevail is proportional to the sum
of the degrees of nodes holding that opinion at the onset of the
process~\cite{donnelly1983finite}. 

\paragraph{Consensus and network structure.} Network structure has been known
to play an important role in opinion diffusion and influence spreading for
quite some time~\cite{morris2000contagion}, under a variety of models.  For
example, consensus under the voter model and dependence of its convergence on
the underlying network topology have been thoroughly
investigated~\cite{donnelly1983finite,hassin1999distributed,cooper2013coalescing}.
For majority dynamics,~\cite{auletta2015minority} characterized topologies for
which an initial majority can be subverted, showing that this is possible for
all but a handful of topologies, including cliques and quasi-cliques. On the
other hand, regardless of the network, there is always an initial opinion
distribution, such that the final majority will reflect the initial one, while
computing an initial opinion configuration that will subvert an initial
majority is topology-dependent and NP-hard in
general~\cite{auletta2018reasoning}.

A number of recent contributions investigated (among other aspects) the
relationship between network structure and consensus in opinion formation
games~\cite{ferraioli2016decentralized,ferraioli2017social}, while extensions
of the Friedkin-Johnsen model to evolving networks were investigated
in~\cite{auletta2019consensus}.

While expansion of the underlying graph typically accelerates
convergence~\cite{cooper2012coalescing,kanade2019coalescence} in many opinion
dynamics, some recent work explicitly points to potentially adverse effects of
network structure on the spread of innovation, at least in scenarios where
opinion update occurs on the basis of private utilities that reflect both the
degree of local consensus and intrinsic value of the competing
opinions~\cite{montanari2010spread,young2011dynamics}. 

While some of our findings are qualitatively consistent with previous work
albeit under completely different models (in
particular,~\cite{montanari2010spread}), our overall approach is very
different, since it completely decouples the mechanism of opinion formation
from modelling of the bias, affording a clear-cut mathematical characterization
of the interplay between bias, underlying opinion dynamics and network
structure.

\paragraph*{Different forms of bias.} Bias in opinion dynamics has been
considered previously in the literature. We briefly review contributions that
are at least loosely related to our framework. For the voter and majority
update rules,~\cite{mukhopadhyay2016binary} introduces bias in the form of
different, opinion-dependent firing rate frequencies of the Poisson clocks that
trigger agents' opinion updates, implicitly enforcing a bias toward the opinion
with lower associated rate. While different, their model is similar to ours in
spirit and some of their results for the voter model are consistent with ours.
Yet, these results only apply in expectation and to very dense networks with
degree $\Omega(n)$, whereas our results for the voter model hold for every
undirected graph.

A somewhat related line of research addresses the presence of stubborn agents
or zealots. Loosely speaking, stubborn agents have a bias toward some
(initially or currently) held opinion, while zealots are agents that never
deflect from some initial opinion. Restricting to the discrete-opinion setting,
which is the focus of this paper,\footnote{For the continuous case, there is a
vast literature; see the seminal paper~\cite{friedkin1990social} and follow-up
work.} the role of zealots and their ability to subvert an initial majority
have been investigated for the voter model (see~\cite{mobilia2003does} and
follow-up work), while~\cite{auletta2017information} investigates majority
dynamics in the presence of stubborn agents that are biased toward the
currently held opinion, providing a full characterization of conditions under
which an initial majority can be subverted.

\section{Notation and Preliminaries}\label{sec:prel}
Let $G = (V,E)$ be an undirected graph with $|V|=n$ nodes, each representing an
agent. Without loss of generality, we assume that $V = [n]:=\{1,\dots,n\}$.
The system evolves in discrete time steps\footnote{This is equivalent to the
asynchronous model in which a node revises its opinion at the arrival of an
independent Poisson clock with rate $1$ \cite{boyd2006randomized}.} and, at any
given time $t \in \mathbb{N}$, each node $v \in V$ holds an {\em opinion}
$x_v^{(t)} \in \{0,1\}$. We use the term opinion liberally here, in the sense
that $0$ and $1$ in general represent competing alternatives, whose meaning is
context-dependent and outside the scope of this paper.  We denote by $\bx^{(t)}
= \big( x_1^{(t)},\ldots,x_n^{(t)} \big)^\intercal$ the corresponding {\em
state} of the system at time $t$. We assume that the initial state of the
system is $\bx^{(0)} = \bzero = (0, \dots, 0)^\intercal$; such assumption is
discussed in Section \ref{se:concl}.  For each $v\in V$, we denote the
neighborhood of $v$ with $N_v := \{u \in V : \{u,v\} \in E\}$ and the degree of
$v$ with $d_v := |N_v|$.  Finally, $\Delta := \min_{v \in V} d_v$ is the
minimum degree of the nodes in $G$. 

Our framework assumes that agents exhibit a bias toward one of the opinions
(e.g., reflecting intrinsic superiority of a technological innovation over the
{\em status quo}), without loss of generality $1$, which we henceforth call the
{\em \better\ opinion}. We model bias as a probability, with a parameter
$\alpha \in (0,1]$. All dynamics we consider are \textit{Markovian}, that is,
given the underlying graph $G$, the distribution of the state $\bx^{(t)}$ at
round $t$ only depends on the state $\bx^{(t-1)}$ at the end of the previous
round. Moreover, they have $\bx = \bone = (1, \dots, 1)^\intercal$ as the only
absorbing state. We use $\tau$ to denote the \emph{absorption time}, which is
the number of rounds for the process to reach the absorbing state $\bone$.
Finally, for a family of events $\{\mathcal{E}_n\}_{n \in \mathbb{N}}$ we say
that $\mathcal{E}_n$ occurs \emph{with high probability} (\emph{w.h.p.}, in
short) if a constant $\gamma > 0$ exists such that $\Prob{}{\mathcal{E}_n} = 1
- \mathcal{O}(n^{-\gamma})$, for every sufficiently large~$n$.

\section{Absorption Time for Majority Dynamics}\label{sec:majority}
In this section, we investigate the time to reach consensus on the dominant
opinion under the majority update rule. More formally, we study the following
random process: Starting from the initial state $\bx^{(0)} = \left(0,\ldots ,
0\right)^\intercal$, in each round $t$ a node $u \in [n]$ is chosen uniformly
at random and $u$ updates its value according to the rule
\[
x_u^{(t)} = 
\left\{
\begin{array}{ll}
1 \quad & \quad \text{with probability } \alpha,\\
M_G(u, \bx) & \quad \text{with probability } 1 - \alpha,
\end{array}
\right.
\] 
where $\alpha \in (0,1]$ is the bias toward the \better\ opinion $1$ and
$M_G(u, \bx)$ is the value held in configuration $\bx^{(t-1)} = \bx$ by the
majority of the neighbors of node $u$ in graph $G$:
\[
M_G(u, \bx) = 
\left\{
\begin{array}{ll}
0 & \quad \text{if } \sum_{v \in N_u} x_v < |N_u|/2,\\
1 & \quad \text{if } \sum_{v \in N_u} x_v > |N_u|/2,\\
\end{array}
\right.
\]
and ties are broken uniformly at random, that is, if $\sum_{v \in N_u} x_v =
|N_u|/2$ then $M_G(u, \bx) = 0$ or $1$ with probability $1/2$.  

It is straightforward to see that for every positive $\alpha$, the above Markov
chain has $\bone$ as the only absorbing state. However, the rate of convergence
is strongly influenced by the underlying graph $G$. In
Subsection~\ref{subse:high_dens} we prove a lower bound on the expected
absorption time that depends exponentially on the minimum degree. This result
implies super-polynomial expected absorption times for graphs whose minimum
degree is $\omega(\log n)$. On the other hand, in
Subsection~\ref{subse:low_dens} we prove that the absorption time is $\bigO(n
\log n)$ on cycle graphs, and further graph families with sub-logarithmic
maximum degree and polynomial (expected) absorption time are briefly discussed
in Subsection~\ref{subse:other}.

\subsection{Slow Convergence on High-Density Graphs}\label{subse:high_dens}
In this section we prove a general lower bound on the expected absorption time,
which only depends on the minimum degree~$\Delta$. To this purpose, we use the
following standard lemma on birth-and-death chains\footnote{Birth-and-death
chains are Markov processes for which, if in state $k$, a transition could only
go to either state $k+1$ or state $k-1$.} (see,
e.g.,~\cite[Section~17.3]{levin2017markov} for a proof).
\begin{lemma}\label{lemma:biasedbdchainsprob}
Let $\{X_t\}_t$ be a birth-and-death chain with state space $\{0,1, \dots, n\}$
such that for every $1 \leq k \leq n-1$
\begin{align*}
	\Prob{}{X_{t+1} = k+1 \,|\, X_t = k} & = p,
	\\
	\Prob{}{X_{t+1} = k-1 \,|\, X_t = k} & = q,
	\\
	\Prob{}{X_{t+1} = k \,|\, X_t = k} & = r,
\end{align*}
with $p + q + r = 1$. For every $i \in \{0,1, \dots, n\}$ let $\tau_i$ 
be the first time the chain hits state $i$, that is,
\(
	\tau_i = \inf \{ t \,|\, X_t = i \}.
\)
If $0 < p < 1/2$, the probability that starting from state $k$ the chain 
hits state $n$ before state $0$ is
\[
	\Prob{k}{\tau_n < \tau_0} = \frac{(q/p)^k - 1}{(q/p)^n - 1} 
	\leq \left(\frac{p}{q}\right)^{n-k}.
\]
\end{lemma}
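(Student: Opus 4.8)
The plan is to recognize this as a classical gambler's-ruin computation for a biased random walk and to solve the associated linear recurrence directly. Write $h(k) = \Prob{k}{\tau_n < \tau_0}$ for the probability of reaching $n$ before $0$ starting from state $k$. By conditioning on the first step of the chain and using the Markov property, for every interior state $1 \leq k \leq n-1$ we obtain the one-step relation $h(k) = p\, h(k+1) + q\, h(k-1) + r\, h(k)$, together with the boundary conditions $h(0) = 0$ and $h(n) = 1$. Since $r = 1 - p - q$, the self-term $r\,h(k)$ cancels and the recurrence reduces to $(p+q)h(k) = p\,h(k+1) + q\,h(k-1)$, a homogeneous second-order linear recurrence with constant coefficients.

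First I would solve this recurrence. Rearranging gives $p\big(h(k+1)-h(k)\big) = q\big(h(k)-h(k-1)\big)$, so the consecutive differences $D_k := h(k)-h(k-1)$ satisfy $D_{k+1} = (q/p) D_k$, whence $D_k = (q/p)^{k-1} D_1$ form a geometric sequence. Summing telescopically, $h(k) = h(0) + \sum_{j=1}^{k} D_j = D_1 \sum_{j=1}^{k} (q/p)^{j-1}$, and because $p \neq q$ (we are given $0 < p < 1/2$, so $q \geq 1/2 > p$ and in particular $q/p \neq 1$) the geometric sum evaluates to $D_1 \cdot \frac{(q/p)^k - 1}{(q/p) - 1}$. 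Imposing the boundary condition $h(n) = 1$ fixes the constant $D_1$, and dividing the two expressions yields the clean closed form
\[
	h(k) = \frac{(q/p)^k - 1}{(q/p)^n - 1},
\]
which is exactly the claimed equality.

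It remains to establish the inequality $h(k) \leq (p/q)^{n-k}$. The natural route is to factor $(q/p)^k$ out of numerator and denominator, writing
\[
	\frac{(q/p)^k - 1}{(q/p)^n - 1}
	= (p/q)^{n-k} \cdot \frac{1 - (p/q)^k}{1 - (p/q)^n},
\]
and then to argue that the trailing fraction is at most $1$. Since $0 < p < q$ we have $0 < p/q < 1$, so both $(p/q)^k$ and $(p/q)^n$ lie in $(0,1)$; moreover $k \leq n$ forces $(p/q)^k \geq (p/q)^n$, hence $1 - (p/q)^k \leq 1 - (p/q)^n$ and the ratio is indeed at most $1$. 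This gives the bound.

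I do not expect a genuine obstacle here: the result is a textbook gambler's-ruin identity and the argument is entirely routine linear-recurrence bookkeeping, which is presumably why the authors relegate it to a cited reference rather than proving it in full. The only point demanding a moment of care is the use of the hypothesis $0 < p < 1/2$: it is what guarantees $q > p$ (equivalently $q/p > 1$), which is needed both to avoid a degenerate constant recurrence when solving and to make the final monotonicity comparison $(p/q)^k \geq (p/q)^n$ run in the right direction. One should also keep in mind that the constant-coefficient derivation implicitly assumes $p, q > 0$; the boundary rows $k = 0, n$ are handled purely by the absorbing conditions $h(0)=0$, $h(n)=1$, so no transition probabilities at the endpoints enter the computation.
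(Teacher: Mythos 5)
The paper never actually proves this lemma: it defers to the textbook it cites (Section~17.3 of Levin--Peres), and your first-step analysis --- reduce to $p\,h(k+1)+q\,h(k-1)=(p+q)h(k)$, observe the differences $D_k$ are geometric with ratio $q/p$, telescope, impose $h(0)=0$, $h(n)=1$, then factor to get the bound --- is exactly that standard gambler's-ruin argument. So there is no divergence in approach, and the recurrence solution and the final factorization
\[
\frac{(q/p)^k - 1}{(q/p)^n - 1} \;=\; \left(\frac{p}{q}\right)^{n-k}\cdot\frac{1-(p/q)^k}{1-(p/q)^n}
\]
are both correct.

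There is, however, one flawed inference. You justify $q>p$ by writing ``we are given $0<p<1/2$, so $q\geq 1/2 > p$.'' That deduction is valid only when $r=0$, i.e.\ when $q=1-p$. Under the lemma's stated hypotheses, where $p+q+r=1$ with a holding probability $r$, nothing rules out, say, $p=0.4$, $q=0.1$, $r=0.5$: this satisfies $0<p<1/2$ yet has $q<p$. Your proof uses $q\neq p$ at the geometric-sum step and $q>p$ in the final monotonicity comparison, so as written it covers only part of the allowed parameter range. The patch is short, and worth making explicit because it also exposes a looseness in the lemma statement itself. If $q>p$, your argument is complete as is. If $0<q<p$, the recurrence solution is unchanged (the differences are still geometric with ratio $q/p\neq 1$), so the equality still holds, and the inequality becomes trivial: by your own factorization the left side is at most $1$, while $(p/q)^{n-k}\geq 1$. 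If $q=p$ (also permitted by the hypotheses, e.g.\ $p=q=0.3$, $r=0.4$), the displayed closed form is $0/0$ and the hitting probability is $k/n$; this corner case is a defect of the statement as written (inherited from specializing the lazy chain while keeping the hypothesis appropriate to $q=1-p$), not of your computation, but a careful write-up should either flag it, add the hypothesis $q>p$ explicitly, or note that the paper only ever invokes the lemma in a regime where $p/q\leq 1-\varepsilon<1$.
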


It is not difficult to show that, for $\alpha \geqslant 1/2$, every graph with
minimum degree $\Delta = \Omega(\log n)$ has $\bigO(n \log n)$ absorption time,
w.h.p. Indeed, since every time a node updates its opinion the node chooses
opinion $1$ with probability at least $\alpha$, as soon as all nodes update
their opinion at least once (it happens within $\bigO(n \log n)$ time steps,
w.h.p., by coupon collector argument) if $\alpha \geqslant 1/2$, every node $u$
will have a majority of $1$s in its neighborhood, w.h.p. 

In the next theorem we prove that, as soon as $\alpha$ is smaller than $1/2$,
the absorption time instead becomes exponential in the minimum degree.

\begin{theorem}\label{thm:min_degree}
Let $G = (V, E)$ be an undirected graph with minimum degree $\Delta$. 
Assume $\alpha \leq \frac{(1-\varepsilon)}{2}$, for an
arbitrary constant $0 < \varepsilon < 1$. The expected absorption time for the
biased opinion dynamics under the majority update rule is 
\[
\Expec{}{\tau} \geq \frac{e^{\frac{\varepsilon^2}{6} \Delta}}{6n}.
\]
\end{theorem}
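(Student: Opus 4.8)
The plan is to reduce the process, up to the moment it can first make genuine progress, to a configuration-free dynamics that I can control with a Chernoff bound. The starting observation is that the majority rule can return $1$ at a node $u$ only when at least $d_u/2 \ge \Delta/2$ of its neighbours currently hold opinion $1$. Hence, as long as the current state lies below the threshold set $A := \bigcup_u A_u$, where $A_u := \{\bx : \sum_{v \in N_u} x_v \ge d_u/2\}$, every node sees a strict minority of $1$'s and every majority evaluation deterministically returns $0$. Consequently, below $A$ the update rule collapses to: pick a node uniformly at random and set it to $1$ with probability $\alpha$ and to $0$ with probability $1-\alpha$, independently of the configuration. I would introduce an auxiliary chain $\widetilde\bx^{(t)}$ that applies this $\mathrm{Bernoulli}(\alpha)$ update at every step forever, and couple it to the true process using the same chosen node and the same bias coin at each step. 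The two processes then agree up to the first time $T$ the state enters $A$. Since $\bone \in A_u$ for every $u$, we have $T \le \tau$, so $\Expec{}{\tau} \ge \Expec{}{T}$, where under the coupling $T$ equals the first entrance time of $\widetilde\bx^{(t)}$ into $A$.

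It then remains to lower bound this entrance time. The crucial point, and the reason one must argue locally rather than through the global count of $1$'s, is that the auxiliary dynamics drive each node to opinion $1$ with equilibrium probability $\alpha$, so the number of $1$'s in the neighbourhood of any node $u$ concentrates around $\alpha d_u$; since $\alpha \le (1-\varepsilon)/2$, this local equilibrium lies strictly below the local barrier $d_u/2$ for every node, regardless of how $\Delta$ compares with $n$. By contrast, the global fraction of $1$'s also equilibrates near $\alpha$, i.e.\ near $\alpha n$ many $1$'s, which may exceed $\Delta/2$, so the total count is not trapped below any barrier and is the wrong object to track. Concretely, fixing a time $s$ and conditioning on the sequence of chosen nodes, the values $\{\widetilde x_v^{(s)}\}_v$ are independent, each $\mathrm{Bernoulli}$ with parameter at most $\alpha$; hence $\sum_{v \in N_u}\widetilde x_v^{(s)}$ is stochastically dominated by a $\mathrm{Bin}(d_u,\alpha)$ variable, and a Chernoff bound using $\alpha \le (1-\varepsilon)/2$ and $d_u \ge \Delta$ gives $\Prob{}{\sum_{v\in N_u}\widetilde x_v^{(s)} \ge d_u/2} \le e^{-\varepsilon^2 \Delta/6}$ for every node $u$ and every time $s$.

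Finally I would assemble the tail bound: a union bound over the $n$ nodes and the first $t$ steps yields $\Prob{}{T \le t} \le t\,n\,e^{-\varepsilon^2 \Delta/6}$, so choosing $t$ of order $\tfrac1n e^{\varepsilon^2\Delta/6}$ keeps $\Prob{}{T > t}$ bounded below by a constant, and $\Expec{}{\tau} \ge \Expec{}{T} \ge t\,\Prob{}{T>t}$ delivers the stated $\tfrac{1}{6n} e^{\varepsilon^2 \Delta/6}$ after tracking constants. As an alternative to the fixed-time estimate, one could view a single neighbourhood count $\sum_{v \in N_u}\widetilde x_v^{(t)}$ as a birth-and-death chain with up and down probabilities $(d_u-j)\alpha/n$ and $j(1-\alpha)/n$ and apply Lemma~\ref{lemma:biasedbdchainsprob} to bound the chance it ever climbs from its equilibrium to $d_u/2$; the ratio $p/q$ at the barrier is $\alpha/(1-\alpha) \le (1-\varepsilon)/(1+\varepsilon) < 1$, which is exactly what renders the crossing exponentially unlikely.

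I expect the main obstacle to be conceptual rather than computational: recognising that the global count of $1$'s is \emph{not} trapped below the barrier, so that one must localise to individual neighbourhoods where $\alpha d_u < d_u/2$ holds unconditionally, and setting up the coupling so that the pre-threshold dynamics are genuinely configuration-independent. Once the reduction to the auxiliary $\mathrm{Bernoulli}(\alpha)$ chain is in place, the Chernoff estimate, the union bound, and the optimisation over $t$ are routine.
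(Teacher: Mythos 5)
Your proposal is correct, and its core estimate goes by a genuinely different route than the paper's. Both arguments share the same first reduction: until some node sees at least $d_u/2$ ones in its neighborhood (your set $A$, the paper's stopping time $\bar{\tau}$), every majority evaluation returns $0$, so the dynamics is a configuration-independent ``pick a uniform node, set it to $1$ with probability $\alpha$'' process; you make this explicit via a coupling, the paper states it directly. Where you diverge is in bounding the time to cross the local barrier. The paper tracks, for each node $u$, the number of its $1$-neighbors as a birth-and-death chain with $p_k(u) = \frac{d_u-k}{n}\alpha$ and $q_k(u)=\frac{k}{n}(1-\alpha)$, and applies its Lemma~\ref{lemma:biasedbdchainsprob} (gambler's ruin) to show each excursion from $d_u/(2+\varepsilon)$ reaches $d_u/2$ with probability at most $e^{-\varepsilon^2 d_u/6}$, then union-bounds over at most $t$ excursions and $n$ nodes --- this is exactly the ``alternative'' you sketch in passing. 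Your main argument instead exploits a clean structural fact about the auxiliary chain: conditioned on the node-selection sequence, each node's value at a fixed time is determined by its last (independent) bias coin, so the values are independent Bernoullis with parameter at most $\alpha$; hence each neighborhood count is stochastically dominated by $\mathrm{Bin}(d_u,\alpha)$, and a standard multiplicative Chernoff bound (whose constants do work out to $e^{-\varepsilon^2\Delta/6}$, in both the $\varepsilon\le 1/2$ and $\varepsilon>1/2$ regimes) plus a union bound over $t$ times and $n$ nodes gives $\Prob{}{T\le t}\le tn\,e^{-\varepsilon^2\Delta/6}$, after which the choice $t=\Theta\bigl(\frac{1}{n}e^{\varepsilon^2\Delta/6}\bigr)$ finishes exactly as in the paper. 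What your route buys: it bypasses the birth-and-death lemma and the slightly delicate excursion-counting (the paper's variable $Y_u$ and the bound $\Prob{}{Y_u\ge t}\ge(1-e^{-\varepsilon^2\Delta/6})^t$ implicitly require an independence-across-trials justification), replacing them with a static conditional-independence observation; the price is that your argument leans on the coupling to the memoryless auxiliary chain, whereas the paper's excursion analysis applies verbatim to the true process before $\bar{\tau}$. Both yield the same constant $\varepsilon^2/6$ and the same final bound.
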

\begin{proof}
Let $S^{(t)}$ be the set of nodes with value $1$ at time $t$. 
For each node $u \in V$, let $n^{(t)}_u$ be the fraction of its neighbors with
value $1$ at round $t$:
\[
	n^{(t)}_u = \frac{| N_u \cap S^{(t)} |}{|N_u|}.
\]
Finally, let $\bar{\tau}$ be the first round in which
$n^{(t)}_u \geq 1/2$ for at least one node $v\in V$, namely,
\[
	\bar{\tau} = 
	\inf \left\{ 
	t \in \mathbb{N} \,:\, n^{(t)}_u \geq 1/2,
	\text{ for some } u \in [n]
	\right\}.
\]
Note that for each round $t \leq \bar{\tau}$ all nodes have a majority 
of neighbors sharing opinion $0$, thus the selected agent at time $t$ updates 
its state to 1 with probability $\alpha$ and to 0 with probability $1-\alpha$.
Moreover, clearly $\tau \geq \bar{\tau}$. We next prove that 
$\Expec{}{\bar{\tau}} \geq e^{\frac{\varepsilon^2}{6}
\Delta}/(6n)$, which implies our thesis.

Observe that, for a node $u$ with degree $d_u$ that has $k$ neighbors with
value $1$ in some round and for every $t \leq \bar{\tau}$, the probabilities
$p_k(u)$ and $q_k(u)$ of increasing and decreasing, respectively, of one unit
the number of its neighbors with value $1$ are
\[
p_k(u) = \frac{d_u - k}{n} \alpha,
\text{ and }
q_k(u) = \frac{k}{n} (1 - \alpha).
\]
Hence, because $\alpha \leq (1-\varepsilon)/2$, for every 
$k \geq d_u / (2+\varepsilon)$ we have that
\[
	\frac{p_k(u)}{q_k(u)} 
	= \frac{d_u - k}{k} \cdot \frac{\alpha}{1-\alpha}
	\leq (1+\varepsilon) \cdot \frac{1-\varepsilon}{1+\varepsilon}
	= 1 - \varepsilon.
\]
Note that
\[
	\frac{d_u}{2} - \frac{d_u}{2+\varepsilon}
	= d_u \frac{\varepsilon}{2(2+\varepsilon)}
	\geq \frac{\varepsilon}{6} d_u.
\]
From Lemma~\ref{lemma:biasedbdchainsprob} it thus follows that, for each node
$u$, as soon as the number of its neighbors with value $1$ enters in the range 
$\left( d_u/(2+\varepsilon), \, d_u /2 \right)$, the probability 
that it will reach $d_u/2$ before going back to  
$d_u/(2+\varepsilon)$ is at most
\[
	(1 - \varepsilon)^{\varepsilon d_u / 6} 
	\leq e^{- \varepsilon^2 d_u / 6}
	\leq e^{- \frac{\varepsilon^2}{6} \Delta},
\]
using $(1-x)^x \leq e^{-x^2}$ for $x \in [0, 1]$.  Hence, if we denote by $Y_u$
the random variable indicating the number of trials before having at least
$1/2$ of the neighbors of $u$ at $1$ we have that for every $t \geq 0$
\[
	\Prob{}{Y_u \geq t} 
	\geq \left( 1 - e^{-\frac{\varepsilon^2}{6} \Delta} \right)^t
	\geq e^{-(3t/2)e^{-\frac{\varepsilon^2}{6} \Delta}},
\]
where in the last inequality we used that $1 - x \geq e^{-3x/2}$ for every $x
\in [0, \frac{1}{2})$. Thus,
\[
\Prob{}{Y_u < t} 
\leq 1 - e^{-(3t/2)e^{-\frac{\varepsilon^2}{6} \Delta}}
\leq \frac{3t}{2} \, e^{-\frac{\varepsilon^2}{6} \Delta},
\]
using $1-e^{-x} \leq x$ for every $x$. Finally, by using the union bound over
all nodes, we have that
\[
\Prob{}{\bar{\tau} < t} 
= \Prob{}{\exists u \in [n] \,:\, Y_u < t}
\leq n \cdot \frac{3t}{2} e^{-\frac{\varepsilon^2}{6} \Delta}.
\]
Thus, for $\bar{t} = e^{\frac{\varepsilon^2}{6} \Delta}/3n$ we have
$\Prob{}{\bar{\tau} \leq \bar{t}} \leq 1/2$ and the thesis follows from Markov
inequality:
\[
\Expec{}{\bar{\tau}} 
\geq \bar{t} \, \Prob{}{\bar{\tau} \geq \bar{t}}
\geq \frac{\bar{t}}{2}.
\qedhere
\]
\end{proof}

\subsection{Fast Convergence on the Cycle}\label{subse:low_dens}
In this section, we prove that the absorption time on an $n$-node cycle graph
is $\bigO(\frac{1}{\alpha} n \log n)$, w.h.p. We make use of the following
\textit{structural} lemma.

\begin{lemma}[Structural property of cycles]\label{lemma:struct_cycle}
	Let $C_n$ be the cycle on $n$ nodes and let every node $v\in V$ 
	have an associated state $x_v \in \{0,1\}$. 
	Let us call $B_i$ and $S_i$ the set of nodes in state $i$ such that:
	every node $v \in B_i$ has both neighbors in the opposite state and 
	every node $v \in S_i$ has one single neighbor in the opposite state. 
	The following holds:
	\[
		|B_0| + \frac{|S_0|}{2} = |B_1| + \frac{|S_1|}{2}.
	\]
\end{lemma}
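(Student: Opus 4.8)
The plan is to prove the identity by counting a single quantity---the number of \emph{bichromatic} edges of the cycle, that is, edges $\{u,v\}$ with $x_u \neq x_v$---in two different ways. Write $\partial$ for this number. Since each bichromatic edge joins exactly one node in state $0$ to exactly one node in state $1$, summing over the state-$0$ endpoints and summing over the state-$1$ endpoints must yield the same total; matching the two expressions gives the claim.

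First I would record the elementary fact that on $C_n$ every node has exactly two neighbors, so a node in state $i$ falls into exactly one of three classes according to how many of its two neighbors hold the opposite state $1-i$: two opposite neighbors (these are precisely the nodes of $B_i$), one opposite neighbor (the nodes of $S_i$), or none (an interior node of a maximal monochromatic run, contributing to neither set). Consequently, for a node $w$ in state $0$, the number of its neighbors in state $1$ equals $2$ if $w \in B_0$, equals $1$ if $w \in S_0$, and equals $0$ otherwise.

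Next I would sum the number of opposite-state neighbors over all state-$0$ nodes. Because every bichromatic edge has a unique state-$0$ endpoint, this sum counts each such edge exactly once, so
\[
	\partial = 2|B_0| + |S_0|.
\]
The identical argument from the state-$1$ side---each bichromatic edge also has a unique state-$1$ endpoint---gives $\partial = 2|B_1| + |S_1|$. Equating the two expressions for $\partial$ and dividing by $2$ produces exactly $|B_0| + |S_0|/2 = |B_1| + |S_1|/2$.

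There is no serious obstacle here: the only point that needs care is the bookkeeping in the double count, namely that each bichromatic edge is registered once and only once in each sum, which is immediate from the fact that its two endpoints carry different states; the monochromatic configuration makes all four sets empty and both sides vanish. I note in passing an equivalent and perhaps more transparent route: the quantity $|B_i| + |S_i|/2$ equals the number of maximal monochromatic runs of state $i$ around the cycle (a length-one run contributes one $B_i$-node, a run of length at least two contributes two $S_i$-endpoints), and---when both states occur---these runs strictly alternate around the cycle, so the two counts coincide (the monochromatic case being trivial).
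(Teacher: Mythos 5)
Your proof is correct, and it takes a genuinely different route from the paper's. You double-count the bichromatic edges: each such edge has exactly one endpoint in each state, so summing opposite-state neighbors over the state-$0$ nodes gives $2|B_0|+|S_0|$, over the state-$1$ nodes gives $2|B_1|+|S_1|$, and equating the two counts yields the identity after dividing by $2$. The paper instead builds an auxiliary ``configuration graph'' $H_C$ whose vertices are the eight local node categories ($B_i$, $L_i$, $R_i$, $Z_i$), observes that every binary coloring of $C_n$ corresponds to a closed walk in $H_C$, decomposes such walks into simple cycles, and verifies the identity cycle-by-cycle on the (symmetry-reduced) list of simple cycles --- an argument that requires enumerating cases and arguing that cycles through $Z_i$ can be ignored. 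Your approach is strictly more elementary and shorter, needs no case analysis or enumeration, and handles the degenerate monochromatic configuration transparently; it also makes clear that nothing beyond ``every node has degree two'' is used. What the paper's machinery buys is a structural description of \emph{all} valid colorings of cycles via walks in $H_C$, which is more than the lemma needs but could in principle support finer-grained claims; for the stated identity, however, your double count (or equivalently your run-counting remark, since maximal monochromatic runs of the two states alternate and hence are equinumerous) is fully sufficient and arguably the ``book'' proof.
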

\begin{proof}
Given any possible binary coloring of $C_n$ each node $v$ belongs to one of the 
following categories:
\begin{itemize}
    \item $v \in B_i$: node $v$ is in state $i$ and both its neighbors are in
    state $j \neq i$.
    \item $v \in R_i$: node $v$ is in state $i$, its left neighbor is in state
    $i$, and its right neighbor is in state $j \neq i$.
    \item $v \in L_i$: node $v$ is in state $i$, its right neighbor is in state
    $i$, and its left neighbor is in state $j \neq i$.
    \item $v \in Z_i$: node $v$ is in state $i$ and zero of its neighbors are
    in state $j\neq i$, i.e., both are in state $i$.
\end{itemize}
We also call $S_i = R_i \cup L_i$. Figure~\ref{fig:legend} illustrates the eight 
(counting symmetries) possible categories.
\begin{figure}[ht]
	\centering
	\includegraphics[width=0.9\textwidth]{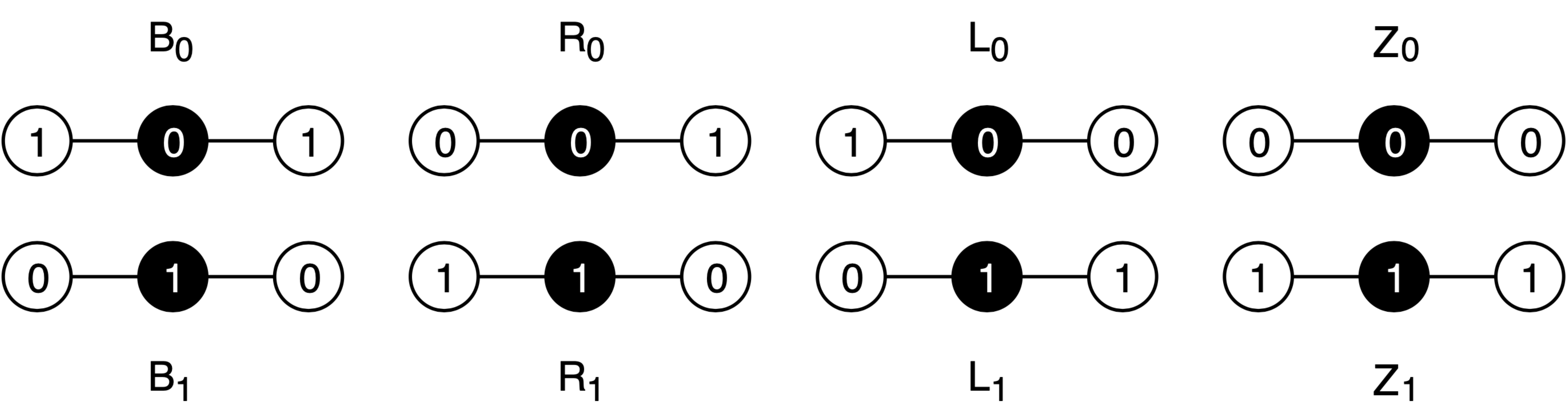}
    \caption{Categories of a node $v$ in $C_n$; node $v$ is black while its
    left and right neighbors are white.}
    \label{fig:legend}
\end{figure}
Let us consider a clockwise walk through $C_n$ that returns to its starting
point. Keeping into account the categories of the nodes previously described
it is possible to generate a graph $H_C$ that describes all possible binary
configurations of a $C_n$ graph, for every $n \in \mathbb{N}$.  We call $H_C$
the \emph{\HCgraph{}} (Figure~\ref{fig:cycle_conf_graph}).  The nodes of $H_C$
represent the possible categories of the nodes of $C_n$ while the edges the
possible neighbors in $C_n$, considering a clockwise walk.  For example, there
is no edge from $B_0$ to $R_0$ since the neighbors of $B_0$ are both in state
1, while a node in $R_0$ is in state 0.
\begin{figure}[ht]
	\centering
	\includegraphics[width=0.5\textwidth]{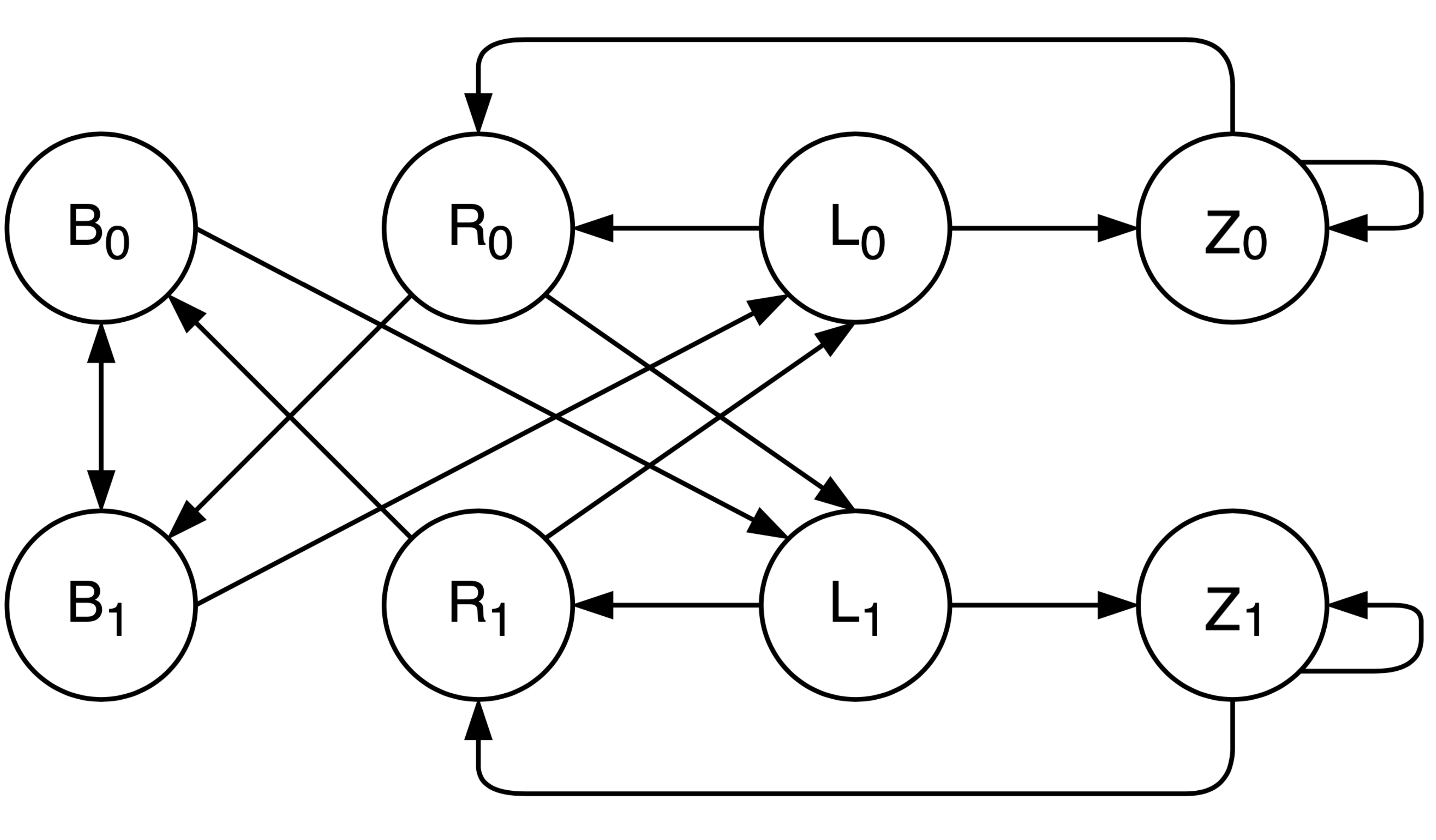}
	\caption{The \emph{\HCgraph{}} $H_C$.}
	\label{fig:cycle_conf_graph}
\end{figure}

Let us pick any node $v$ in $C_n$ and let us walk through clockwise until we
return to $v$. Let us pick the node of $H_C$ corresponding to the category $v$
belongs to and follow the clockwise walk that we do on $C_n$ also on $H_C$, by
moving on the corresponding states. It follows that after $n$ steps the walk on
$C_n$ will be back to $v$ and the walk on $H_C$ will be back to the node
representing the category of $v$.  Note that this implies that the walk on
$H_C$ is a cycle and, more in general, that every cycle of length $n$ on $H_C$
represent a possible binary configuration of the nodes of a corresponding cycle
graph $C_n$.

Note that every possible cycle in $H_C$ is a combination of simple cycles (that
go through each node at most once) on $H_C$.  We prove that the structural
property of the lemma holds for every simple cycle on $H_C$. By commutativity
and associativity of addition, the property directly transfer also to
composition of simple graphs.  In order to reduce the number of simple cycles
(which are 17; they are easy to find on a computer given the small size of the
graph $H_C$---the problem is
$\mathit{\#P}$-hard~\cite{arora2009computational}), we avoid cycles that pass
through $Z_i$ since $|Z_i|$ does not appear in the lemma; in fact, every
cycle passing through $Z_i$ does $L_i \rightarrow Z_i \rightarrow R_i$ and the
only other outgoing edge of $L_i$ is $L_i \rightarrow R_i$.  In other words,
excluding simple cycles passing through node $Z_i$ does not have any effect on
the following calculations.  By also taking advantage of symmetries in $i$ and
$j$, all the remaining simple cycles are the following four, for which the
equality of the lemma is true:
\begin{itemize}
	\item $(B_i \rightarrow B_j)$: 
		$|B_i|$ cancels out with $|B_j|$.
		\item $(B_i \rightarrow L_j \rightarrow R_j)$: 
		$|B_i|$ cancels out with $\frac{|R_j|+|L_j|}{2}$.
		\item $(R_i \rightarrow L_j \rightarrow R_j \rightarrow L_i)$:
		$\frac{|R_i|+|L_i|}{2}$ cancels out with $\frac{|R_j|+|L_j|}{2}$.
		\item $(B_i \rightarrow B_j \rightarrow L_i \rightarrow R_i \rightarrow L_j \rightarrow R_j)$: 
			$|B_i|$ cancels out with $|B_j|$; 
			$\frac{|R_i|+|L_i|}{2}$ cancels out with $\frac{|R_j|+|L_j|}{2}$.
			\qedhere
\end{itemize}
\end{proof}

\begin{theorem}[Cycles]\label{thm:cycle}
Let $G = C_n$ be the cycle on $n$ nodes. Under the majority update rule, we
have $\tau = \mathcal{O}\left(\frac{1}{\alpha} n \log n\right)$, with high
probability.
\end{theorem}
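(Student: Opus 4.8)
The plan is to track the number of nodes holding opinion $1$ as a one-dimensional process and to use the structural lemma to show that its only drift comes from the bias. Let $k_t$ denote the number of ones at time $t$ and set $D_t = n - k_t$, the number of zeros. Since $\bone$ is the only absorbing state, we have $\{\tau > t\} = \{D_t > 0\}$, so it suffices to control $D_t$.

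First I would compute the one-step expected change of $k_t$, conditioning on the node $u$ chosen in a round. The bias branch (probability $\alpha$) raises $k_t$ by one precisely when $u$ currently holds $0$, contributing $\alpha(n-k_t)/n$ in expectation over the uniform choice of $u$. The majority branch (probability $1-\alpha$) changes $k_t$ only through the categories of Lemma~\ref{lemma:struct_cycle}: a node in $B_0$ flips up by one, a node in $B_1$ flips down by one, and a \emph{tie} node in $S_0$ (resp.\ $S_1$) flips up (resp.\ down) with probability $1/2$. Summing these contributions over the uniformly chosen node, the expected change from the majority branch equals $\frac1n\big[(|B_0| + |S_0|/2) - (|B_1| + |S_1|/2)\big]$, which is \emph{exactly zero} by Lemma~\ref{lemma:struct_cycle}. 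Hence the whole update satisfies the clean recursion
\[
\Expec{}{D_{t+1} \mid \mathcal{F}_t} = \Big(1 - \frac{\alpha}{n}\Big) D_t.
\]

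The remaining steps are routine. Taking expectations and iterating from $D_0 = n$ gives $\Expec{}{D_t} = n\,(1-\alpha/n)^t \le n\,e^{-\alpha t/n}$. Choosing $t = \frac{1+\gamma}{\alpha}\,n\log n$ for a constant $\gamma > 0$ makes this at most $n^{-\gamma}$, and since $D_t$ is a non-negative integer, Markov's inequality yields $\Prob{}{\tau > t} = \Prob{}{D_t \ge 1} \le \Expec{}{D_t} \le n^{-\gamma}$, which is precisely the claimed $\tau = \mathcal{O}\big(\frac{1}{\alpha}\,n\log n\big)$ w.h.p.

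The main obstacle is the drift computation, and it is entirely discharged by the structural lemma: the real content is the observation that on the cycle the majority update is \emph{neutral} for the number of ones, so that in expectation the process behaves like a pure biased drift toward $\bone$. I would emphasize that the process is not monotone—the majority branch can and does decrease $k_t$ (e.g.\ a tie node in $S_1$)—so a direct coupon-collector or monotonicity argument is unavailable; the exact multiplicative recursion above, delivered by Lemma~\ref{lemma:struct_cycle}, is what lets us bypass any need for concentration and conclude directly from the first moment.
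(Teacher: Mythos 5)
Your proposal is correct and follows essentially the same route as the paper's proof: you use Lemma~\ref{lemma:struct_cycle} to show the majority branch contributes zero drift, derive the multiplicative recursion $\Expec{}{n - X_t} = n\left(1-\frac{\alpha}{n}\right)^t$ for the expected number of zeros, and conclude via Markov's inequality. If anything, conditioning on the full configuration (your $\mathcal{F}_t$) rather than on the count $X_{t-1}=k$ is a slightly cleaner formulation of the same argument, since the transition probabilities depend on the configuration and not merely on the count.
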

\begin{proof}
	Denote by $V_i$ the set of nodes with state $i$.
	Given a configuration $\bx\in\{0,1\}^n$ of $C_n$, 
	let $B_i = \{v \in V_i : \forall u \in N_v, x_u \neq i \}$
	and $S_i = \{v \in V_i : \exists u,w \in N_v, x_u \neq x_w \}$
	(see Lemma \ref{lemma:struct_cycle}).
	Let $X_t$ be the random variable indicating the number of nodes 
	in state $1$ at round $t$ and observe that for every $k$, we have:
	\[
		\Prob{}{X_t = h \,|\, X_{t-1} = k} = \left\{\begin{array}{ll}
			q_k & \quad \mbox{ if } h = k-1,
			\\
			r_k & \quad \mbox{ if } h = k,
			\\
			p_k & \quad \mbox{ if } h = k+1,
		\end{array}\right.
	\]
	where $q_k = (1-\alpha)\left(\frac{|B_1|}{n} +\frac{1}{2}\frac{|S_1|}{n}\right)$,
	$p_k = \alpha \frac{n-k}{n} + (1-\alpha)\left(\frac{|B_0|}{n} + \frac{1}{2}\frac{|S_0|}{n}\right)$,
	and $r_k = 1- q_k - p_k$. 
	Therefore, the expected value of $X_t$, conditioned to $X_{t-1} =k$, is
	\begin{align*}\label{eq:exp_cond}
	\Expec{}{X_t \;|\; X_{t-1} = k}
	&= (k-1)q_k + k r_k + (k+1)p_k 
	 = k - q_k + p_k\\
	&= k + \alpha \frac{n-k}{n} + 
	\frac{1-\alpha}{n}\left(|B_0| + \frac{|S_0|}{2} - |B_1| - \frac{|S_1|}{2}\right)\\
	&\stackrel{(a)}{=}k + \alpha \frac{n-k}{n},
	\end{align*}
	where in derivation $(a)$ we use Lemma~\ref{lemma:struct_cycle}.
	We therefore have:
	\begin{align*}
	\Expec{}{X_t}
	&= \sum_{k = 0}^n \Expec{}{X_t \;|\; X_{t-1} = k} 
	\Prob{}{X_{t-1} = k} 
	\\
	& = \sum_{k = 0}^n \left( \alpha + \left(1 - \frac{\alpha}{n}\right) k \right)
	\Prob{}{X_{t-1} = k} 
	\\
	& = \alpha \sum_{k = 0}^n \Prob{}{X_{t-1} = k} +
	\left(1 - \frac{\alpha}{n}\right) \sum_{k = 0}^n k \Prob{}{X_{t-1} = k} 
	\\
	& = \alpha + \left(1 - \frac{\alpha}{n} \right) \Expec{}{X_{t-1}}.
	\end{align*}
	Solving this recursion with $\Expec{}{X_0} = 0$ we get
	\[
	\Expec{}{X_t} 
	= \alpha \sum_{i = 0}^{t-1} \left( 1 - \frac{\alpha}{n} \right)^i 
	= \alpha \frac{1 - \left(1 - \alpha / n \right)^t}{\alpha / n}.
	\]
	The expected number $n - X_t$ of nodes in state $0$ at round $t$ is thus 
	\[
	\Expec{}{n - X_t} = n \left( 1 - \frac{\alpha}{n} \right)^t,
	\]
	that is smaller than $\frac{1}{n}$ for $t \geq \frac{2}{\alpha}n \log n$.
	Hence,
	\[
		\Prob{}{\tau > \frac{2}{\alpha} n \log n} 
		= \Prob{}{n - X_{\frac{2}{\alpha}n \log n} \geq 1}
		\stackrel{(b)}{\leqslant} \Expec{}{n - X_{\frac{2}{\alpha}n \log n}}
		\leqslant 1/n,
	\]
	where in $(b)$ we use the Markov inequality.
\end{proof}

\subsection{Further Low-Density Graph Families}\label{subse:other}
It is not difficult to show that convergence times are also polynomial in the
cases of trees of degree $\mathcal{O}(\log n)$ and disconnected cliques of size
$\bigO(\log n)$. These results are summarized as the following theorem.

\begin{theorem}[Trees and disconnected cliques]\label{thm:trees_cliques}
Assume $G = (V, E)$ is a tree of degree $\mathcal{O}(\log n)$ (resp.\ a set of
disconnected cliques, each of size $\mathcal{O}(\log n)$). Then, for every
constant $\alpha\in (0, 1]$, the expected absorption time is polynomial.
\end{theorem}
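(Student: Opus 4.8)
The unifying engine is a standard \emph{window-and-restart} reduction: if from \emph{every} configuration the process reaches $\bone$ within $T$ rounds with probability at least $p$, then $\tau$ is stochastically dominated by $T$ times a geometric random variable of parameter $p$, whence $\Expec{}{\tau}\le T/p$. It therefore suffices, for each family, to exhibit a polynomial window length $T$ together with a per-window success probability $p\ge n^{-\bigO(1)}$; the hidden constant in the exponent may depend on $\alpha$ and on the constant in the $\bigO(\log n)$ size/degree bound, which is harmless.

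For the disconnected cliques this can be carried out cleanly. Fix one clique on $m=\bigO(\log n)$ vertices and restrict attention to the rounds that pick a vertex of this clique; the number $k$ of $1$'s then evolves as a birth-and-death chain whose up/down probabilities are, up to the selection factor, $p_k\propto \alpha(m-k)$ (reinforced by the majority once $k>m/2$) and $q_k\propto(1-\alpha)k$ (only while $k<m/2$). The key point is that the barrier $\prod_k q_k/p_k$ accumulated across the lower half telescopes into a ratio of factorials times $((1-\alpha)/\alpha)^{\Theta(m)}$, i.e.\ it is $e^{\bigO(m)}=n^{\bigO(1)}$ precisely because $m=\bigO(\log n)$. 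Feeding this into the hitting estimate of Lemma~\ref{lemma:biasedbdchainsprob} (in its variable-rate form) shows that from any state the chain reaches $m$ before returning to $0$ with probability $n^{-\bigO(1)}$, while the all-$1$ state is absorbing for the clique; hence one clique is absorbed within a polynomial number of clique-rounds, i.e.\ within a polynomial number of global rounds after rescaling by the factor $n/m$. Since the cliques do not interact, $\tau=\max_c\tau_c\le\sum_c\tau_c$, and summing the $\bigO(n/\log n)$ per-clique bounds keeps the total polynomial. This is also consistent with the lower bound of Theorem~\ref{thm:min_degree}, which for minimum degree $\bigO(\log n)$ is itself only polynomial.

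For the trees I would root $G$ at an arbitrary vertex $\rho$ and drive the system to $\bone$ \emph{bottom-up}. The clean part of the induction is that any internal node $u$ having at least two children, all of them already in state $1$, has a strict majority of $1$'s among its neighbours \emph{regardless of its parent}; such a $u$ therefore turns $1$ with probability $1$ when selected and is stable as long as its children remain $1$, so progress at branch nodes is monotone once their subtrees are filled. What is \emph{not} monotone are the leaves and the degree-$2$ chains (pendant paths): a leaf, or a degree-$2$ vertex, set to $1$ can revert while its parent is still $0$, and these are exactly the configurations governed by the cycle/path dynamics analysed in Section~\ref{subse:low_dens}. I therefore expect the main obstacle to be controlling these reversions along pendant paths before a parent ``locks them in'', which calls either for a drift argument in the spirit of Lemma~\ref{lemma:struct_cycle} and Theorem~\ref{thm:cycle} applied to each maximal degree-$2$ segment, or for processing each node together with its immediate stabilising neighbourhood. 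The role of the degree bound is then transparent: at every branch node the event ``a majority of the $d_u=\bigO(\log n)$ neighbours becomes $1$'' is, exactly as in the clique case, a barrier of height $e^{\bigO(d_u)}=n^{\bigO(1)}$, so each local step succeeds with inverse-polynomial probability; combining the polynomially many local barriers with the path analysis through the window-and-restart reduction yields the claimed polynomial expected absorption time.
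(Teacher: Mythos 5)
The paper states Theorem~\ref{thm:trees_cliques} without any proof (Section~\ref{subse:other} merely asserts the results are ``not difficult to show''), so your proposal can only be judged on its own merits rather than compared to the authors' argument. The clique half of your proposal is essentially sound. For one clique on $m=\bigO(\log n)$ vertices the count of ones is indeed a birth-and-death chain with $q_k/p_k \le \frac{k(1-\alpha)}{(m-k)\alpha}$ below the majority threshold and $q_k=0$ comfortably above it, so the gambler's-ruin barrier satisfies $\max_k \prod_{j\le k} q_j/p_j \le \bigl(\tfrac{1-\alpha}{\alpha}\bigr)^{m/2}\prod_{j\le m/2}\tfrac{j}{m-j} \le e^{\bigO(m)} = n^{\bigO(1)}$; the all-ones clique is locked; window-and-restart gives a polynomial expected per-clique time; and $\Expec{}{\tau} \le \sum_c \Expec{}{\tau_c}$ handles the $\bigO(n/\log n)$ cliques. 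The details you elide (a variable-rate version of Lemma~\ref{lemma:biasedbdchainsprob}, since that lemma is stated for constant $p,q,r$, and the conversion between clique-rounds and global rounds) are routine.

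The tree half, however, contains a genuine gap, which your own hedging (``I would \dots'', ``I expect the main obstacle to be \dots'') already signals: it is a plan, not a proof. Two things are missing. First, the pendant-path analysis is not carried out, and Theorem~\ref{thm:cycle} cannot simply be applied ``in the spirit of'' to each maximal degree-$2$ segment: Lemma~\ref{lemma:struct_cycle} rests on an exact cancellation specific to the cycle, and on a pendant path whose top vertex borders a branch node of uncontrolled state (possibly $0$ for a long time) the drift computation acquires a negative boundary term of order $(1-\alpha)/n$, which dominates exactly when few zeros remain; this regime must be re-analyzed, not cited. Second, and more fundamentally, you never justify why the local inverse-polynomial success probabilities compose \emph{additively in time} rather than \emph{multiplicatively in probability}. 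A single window-and-restart over the whole tree would require all branch nodes to succeed within one window; for a complete binary tree, with $\Theta(n)$ branch nodes each carrying a constant-probability barrier, that naive combination yields success probability $e^{-\Theta(n)}$, not $n^{-\bigO(1)}$. What rescues the argument is precisely the locking property you observe (once the subtree rooted at a branch node is all ones, it stays all ones forever, so progress is never undone), but exploiting it requires a bottom-up induction over branch nodes --- stage $i$ completes when the ``spider'' formed by branch node $i$, its pendant legs, and the already-locked subtrees at their ends becomes all ones; each stage has polynomial expected duration; the total is a sum over polynomially many stages --- and that induction, including the race between the center being selected and the tops of the legs reverting, is the actual content of the theorem for trees. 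Your closing sentence, ``combining the polynomially many local barriers \dots yields the claimed polynomial expected absorption time,'' asserts exactly the step where all the difficulty lies.
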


\section{Absorption Time for the Voter Model}\label{sec:voter}
As mentioned in the introduction, the voter model has received considerable
attention as an opinion dynamics in the more and less recent
past~\cite{liggett2012interacting}. It may be regarded as a ``linearized'' form
of the majority update rule, in the sense that, upon selection, a node pulls
each of the two available opinions with probability proportional to the
opinion's support within the node's neighborhood.  Despite such apparent
similarity, the two update rules result in quite different behaviors of the
biased opinion dynamics. Namely, for the voter model, absorption times to the
dominant opinion are polynomial with high probability as long as $1/\alpha$ is
polynomial, regardless of the underlying topology. These results are clearly at
odds with those of Section~\ref{sec:majority}. 

The biased voter model can formally be defined as follows: Starting from some
initial state $\bx^{(0)}$, at each round $t$ a node $u \in [n]$ is chosen
uniformly at random and its opinion is updated as
\[
x_u^{(t)} = 
\left\{
\begin{array}{ll}
1 \quad & \quad \text{with probability } \alpha,\\
V_G(u, \bx) & \quad \text{with probability } 1 - \alpha,
\end{array}
\right.
\] 
where $\alpha \in (0,1]$ is a parameter measuring the bias toward the better
opinion $1$ and $V_G(u, \bx)$ is the value held in configuration
$\bx^{(t-1)}=\bx$ by a node sampled uniformly at random from the neighborhood
of node $u$.  We assume $\bx^{(0)} = \bzero$ for simplicity, though we remark
that Theorem \ref{thm:conv_time_asy} below holds for any $\bx^{(0)}\in\{0,
1\}^n$.

As the proof of Theorem~\ref{thm:conv_time_asy} highlights, the biased opinion
dynamics under the voter update rule can be succinctly described by a {\em
nonhomogeneous} Markov chain~\cite{seneta2006non}. Although nontrivial to study
in general, we are able to provide tight bounds in probability for the
simplified setting we consider.

\begin{theorem}\label{thm:conv_time_asy}
Let $G=(V,E)$ be an arbitrary graph. The biased opinion dynamics with voter as
update rule reaches state $\bone$ within $\tau = \mathcal{O}(\frac{1}{\alpha} n
\log n)$ steps, with high probability.
\end{theorem}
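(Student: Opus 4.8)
The plan is to track, for each node $v$, the marginal probability $p_v^{(t)} = \Prob{}{x_v^{(t)} = 0}$ that $v$ still holds the non-dominant opinion at round $t$, and to show that the vector $\mathbf{p}^{(t)} = \big(p_1^{(t)}, \dots, p_n^{(t)}\big)^\intercal$ decays geometrically. The feature that makes this approach work --- and that sharply distinguishes voter from majority --- is that the expected update under the voter rule is \emph{linear} in the neighbors' states, so the marginals obey a closed linear recursion, with no reference to joint distributions.

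First I would condition on the configuration $\bx^{(t-1)}$ and compute the one-step expectation of the indicator $\mathbf{1}[x_v^{(t)} = 0]$. Node $v$ is left untouched with probability $1 - 1/n$; if it is selected (probability $1/n$) it becomes $1$ with probability $\alpha$, and otherwise copies a uniformly random neighbor, so a copied neighbor $u$ contributes the event $\{x_u^{(t-1)} = 0\}$. Taking expectations gives
\[
p_v^{(t)} = \left(1 - \tfrac{1}{n}\right) p_v^{(t-1)} + \frac{1-\alpha}{n}\cdot\frac{1}{d_v}\sum_{u \in N_v} p_u^{(t-1)},
\]
which in matrix form reads $\mathbf{p}^{(t)} = M\,\mathbf{p}^{(t-1)}$ with $M = \left(1 - \tfrac{1}{n}\right) I + \tfrac{1-\alpha}{n}\, W$, where $W$ is the row-stochastic random-walk matrix of $G$ (with $W_{vu} = 1/d_v$ for $u \in N_v$).

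The key observation is that $M$ is nonnegative and every one of its rows sums to $\left(1 - \tfrac1n\right) + \tfrac{1-\alpha}{n} = 1 - \tfrac{\alpha}{n} < 1$; that is, $M$ is \emph{substochastic} with uniform deficiency $\alpha/n$. This yields the contraction $\|M\mathbf{q}\|_\infty \le \left(1 - \tfrac{\alpha}{n}\right)\|\mathbf{q}\|_\infty$ for every nonnegative vector $\mathbf{q}$, and hence, since $p_v^{(0)} = 1$ for all $v$ (the process starts at $\bzero$), we get $p_v^{(t)} \le \|\mathbf{p}^{(t)}\|_\infty \le \left(1 - \tfrac{\alpha}{n}\right)^t$ for every $v$. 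Because $\bone$ is the unique absorbing state we have $\{\tau > t\} = \{\bx^{(t)} \ne \bone\}$, so a union bound over the $n$ nodes gives
\[
\Prob{}{\tau > t} \le \sum_{v \in V} p_v^{(t)} \le n\left(1 - \frac{\alpha}{n}\right)^t \le n\, e^{-\alpha t / n}.
\]
Setting $t = \tfrac{2}{\alpha}\, n \log n$ makes the right-hand side at most $1/n$, which establishes the $\mathcal{O}\!\left(\tfrac{1}{\alpha} n \log n\right)$ bound w.h.p. Since the initial state enters only through $\|\mathbf{p}^{(0)}\|_\infty \le 1$, the same argument holds for any $\bx^{(0)} \in \{0,1\}^n$, as claimed.

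I expect the crux to be conceptual rather than computational: recognizing that first moments decouple into the \emph{linear} system $\mathbf{p}^{(t)} = M \mathbf{p}^{(t-1)}$ and that $M$ is uniformly substochastic. Once this is in place, the contraction and the union bound are routine. It is precisely this linearity that lets the argument be carried out on the per-node marginals --- rather than on the count $X_t$, whose transition probabilities are configuration-dependent, as the cycle analysis in Section~\ref{subse:low_dens} illustrates --- and this is what removes any dependence on the topology.
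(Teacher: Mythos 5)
Your proof is correct and follows essentially the same route as the paper: your marginals $p_v^{(t)} = \Prob{}{x_v^{(t)}=0}$ are exactly the paper's $\Expec{}{1-x_v^{(t)}}$, your matrix $M = \left(1-\tfrac{1}{n}\right)I + \tfrac{1-\alpha}{n}W$ is the paper's $\tfrac{1}{n}\bigl[(1-\alpha)P + (n-1)I\bigr]$, and your substochasticity/$\ell^\infty$-contraction argument is just a rephrasing of the paper's factorization $M = \left(1-\tfrac{\alpha}{n}\right)\hat{P}$ with $\hat{P}$ stochastic, followed by the same union bound. No gaps; the observation that the argument is initial-state-independent also matches the paper's remark.
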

\begin{proof}
For every node $v \in V$, the expected state of $v$ at time $t+1$, conditioned
on $\bx^{(t)} = \bx$ is
\begin{align*}
	\Expec{}{x_v^{(t+1)} \;|\; \bx^{(t)}=\bx} 
	&=\frac{1}{n} \left[\alpha + \frac{(1 - \alpha)}{d_v}\sum_{u \in N_v} x_u\right] +\left(1 - \frac{1}{n}\right)x_v
	\\
	&= \frac{\alpha}{n} + \frac{1}{n}\Big[ (1 - \alpha)(P\bx)_v + (n-1)(I\bx)_v\Big],
\end{align*}
where $P = D^{-1}A$ is the transition matrix of the simple random walk on $G$
(with $D$ the diagonal degree matrix and $A$ the adjacency matrix of the graph)
and $I$ is the identity matrix. Considering all nodes we can write the vector
form of the previous equation as follows:
\[
	\Expec{}{\bx^{(t)} \;|\; \bx^{(t-1)}=\bx} 
	= \frac{\alpha}{n}\bone + \frac{1}{n}\Big[ (1 - \alpha)P + (n-1)I\Big]\bx.
\] 

This immediately implies the following equation, relating expected states at
times $t - 1$ and $t$ (with $\Expec{}{\bx^{(0)}} = \bx$):
\begin{align*}
& \Expec{}{\bx^{(t)}} 
= \frac{\alpha}{n}\bone + \frac{1}{n}\Big[ (1 - \alpha)P + (n-1)I\Big] \Expec{}{\bx^{(t-1)}}.
\end{align*}

Now, consider $\bone - \bx^{(t)}$, the difference between the absorbing state
vector $\bone$ and the state vector at a generic time $t$. Obviously, $(\bone -
\bx^{(t)})_v \ge 0$ deterministically, for every $v$ and for every $t$. As for
the expectation of this difference, we have:

{\small\begin{align}\label{eq:expec_diff}
&\Expec{}{\bone - \bx^{(t)}} 
= \frac{1}{n}\Big[ (1 - \alpha)P + (n-1)I\Big] \Expec{}{\bone - 
\bx^{(t-1)}},
\end{align}}
where the equality follows by collecting and rearranging terms, after observing
that both matrices $P$ and $I$ have eigenvalue $1$ with associated eigenvector
$\bone$. Moreover, we have
\[
	\frac{1}{n}\Big[ (1 - \alpha)P + (n-1)I\Big] = \left(1 - \frac{\alpha}{n}\right)\hat{P},
\]
with $\hat{P} := \frac{n-1}{n-\alpha}\left[\left(\frac{1-\alpha}{n-1}\right)P +
I\right]$ a stochastic matrix. This follows immediately by observing that both
$P$ and $I$ are stochastic, so that all rows of $(1 - \alpha)P + (n-1)I$
identically sum to $n - \alpha$. By solving the recursion in
Eq.~\eqref{eq:expec_diff} we obtain
\[
	\Expec{}{\bone - \bx^{(t)}} = 
	\left(1 - \frac{\alpha}{n}\right)^t\hat{P}^t \left[	\bone-\bx^{(0)}\right] 
	\stackrel{(a)}{=} \left(1 - \frac{\alpha}{n}\right)^t\bone\ - 
	\left(1 - \frac{\alpha}{n}\right)^t\hat{P}^t \bx^{(0)},
\]
where in $(a)$ we use the fact that $\hat{P}^t$ is a stochastic matrix, thus
with main eigenvalue $1$ and associated eigenvector $\bone$.  Next, observe
that for every $v$, we have $\left(\hat{P}^t \bx^{(0)}\right)_v\ge 0$, so we
also have $\Expec{}{1 - x_v^{(t)}}\le\left(1 - \frac{\alpha}{n}\right)^t$.

Therefore, for every time $t \geq \frac{2}{\alpha} n\log n$ we have
\[
	\Expec{}{1 - x_v^{(t)}} \leq \frac{1}{n^2}
\]
for every $v \in V$. Because the $x_v^{(t)}$'s are binary random variables
\[
\Prob{}{x_v^{(t)} = 0}
= \Prob{}{1 - x_v^{(t)} = 1} \leq \Prob{}{1 - x_v^{(t)} \geq 1} \\
\leq \Expec{}{1 - x_v^{(t)}} \leq \frac{1}{n^2}
\]
where in the second-to-last inequality we used the Markov inequality. Hence, in
$\mathcal{O}(\frac{1}{\alpha}n\log n)$ rounds the process converges to the
absorbing state $\bone$, with high probability.
\end{proof}

Note that Theorem~\ref{thm:conv_time_asy} implies that the convergence time is
still $\mathcal{O}\left(n^{1+s}\log n\right)$ when $\alpha =
\Theta\left(\frac{1}{n^s}\right)$ for any $s > 0$, hence polynomial as long as
$s$ is constant.

\section{Discussion and Outlook}\label{se:concl}
In this paper, we considered biased opinion dynamics under two popular update
rules, namely majority~\cite{krapivsky2003dynamics} and the voter
model~\cite{liggett2012interacting}. Although related, these two models exhibit
substantial differences in our setting. Whereas the voter model enforces a
drift toward the majority opinion within a neighborhood, in the sense that this
is adopted with probability proportional to the size of its support, majority
is a nonlinear update rule, a feature that seems to play a crucial role in the
scenario we consider. This is reflected in the absorption time of the resulting
biased opinion dynamics, which is $\bigO\left(\frac{1}{\alpha}n \log n\right)$
for the voter model, regardless of the underlying topology, whereas it exhibits
a far richer behavior under the majority rule, being super-polynomial (possibly
exponential) in dense graphs.  It may be worth mentioning that in the case of
two opinions, the majority rule is actually equivalent to the (unweighted)
median rule, recently proposed as a credible alternative to the weighted
averaging of the DeGroot's and Friedkin-Johnsen's models~\cite{mei2019occam}.

\paragraph{A modular model.} Both scenarios we studied are instantiations of a
general model that is completely specified by a triple $(\bz, \alpha,
\mathcal{D})$, with $\bz$ an initial opinion distribution, $\alpha \in (0,1]$ a
probability measuring the magnitude of the bias toward the \better\ opinion,
and $\mathcal{D}$ an {\em update rule} that specifies some underlying opinion
dynamics. In more detail, a biased opinion dynamics can be succinctly described
as follows.

\smallskip The system starts in some state $\bx^{(0)}= \bz$, corresponding to
the initial opinion distribution; for $t>0$, let $\bx^{(t-1)} = \bx$ be the
state at the end of step $t - 1$. In step $t$, a node $v$ is picked uniformly
at random from $V$ and its state is updated as follows:
\[
	x_v^{(t)} = \left\{\begin{array}{ll}
		1			& \text{with probability }	\alpha,
		\\
		\mathcal{D}_G(v, \bx)	& \text{with probability } 	1-\alpha,
	\end{array}\right.
\]
where $\mathcal{D}_G: V\times\{0, 1\}^n \rightarrow \{0,1\}$ is the update
rule.\footnote{The subscript $G$ highlights the fact that result of the
application of a given update rule $\mathcal{D}$ in general depends on both the
current state and the underlying graph $G$. The above definition can be easily
adjusted to reflect the presence of weights on the edges.} When the update rule
is probabilistic (as in the voter model), $\mathcal{D}_G(v, \bx)$ is a random
variable, conditioned to the value $\bx$ of the state at the end of step $t -
1$.

\noindent\emph{Remark.} It is simple to see that $\bone$ is the only absorbing
state of the resulting dynamics, whenever $\alpha \neq 0$ and $\mathcal D$ does
not allow update of an agent's opinion to one that is not held by at least one
of  the agent's neighbors, which is the case for many update rules in the
discrete-opinion setting.

\smallskip We further remark that the initial condition $\bx^{(0)} = \bzero$
considered in this paper is not intrinsic to the model, it rather reflects
scenarios (e.g., technology adoption) where a new, superior alternative to the
status quo is introduced, but its adoption is possibly slowed by inertia of the
system. Although the reasons behind system's inertia are not the focus of this
paper, inertia itself is expressed here as a {\em social pressure} in the form
of some update rule $\mathcal{D}_G$. Another reason for choosing a fixed
initial state ($\bzero$ in our case) is being able to compare the behavior of
the biased opinion dynamics under different update rules on a common basis. 

Finally, it is worth mentioning that Theorem~\ref{thm:conv_time_asy} and the
upper bounds given in Section~\ref{subse:other} hold regardless of the initial
opinion distribution.

\paragraph{Outlook.} This paper leaves a number of open questions. A first one
concerns general upper bounds on convergence times under the majority update
rule. Even though the topology-specific upper bounds given in
Section~\ref{sec:majority} might suggest general upper bounds that depend on
the maximum degree, thus mirroring the result of Theorem~\ref{thm:min_degree},
this turns out to not be the case, with preliminary experimental results
suggesting a more complicated dependence on degree distribution.  In
particular, convergence time is fast for reasonably large values of
$\alpha<1/2$ (e.g., $\alpha \geq 0.1$), but it presents a sharp transition as
$\alpha$ becomes sufficiently small (e.g., this happens around $0.01$ for the
Petersen graph).  The dependence of this threshold on $n$ is not completely
clear.

A further question is whether expected absorption time is always
$\mathcal{O}(n\log n)$ when $\alpha \geq 1/2$, irrespective of the underlying
dynamics and topology. This is clearly true for the voter model from
Theorem~\ref{thm:conv_time_asy} and it also holds for majority, whenever the
underlying network has minimum degree $\Omega(\log n)$ (see comment after
Lemma~\ref{lemma:biasedbdchainsprob} for an idea of the proof argument for
dense graphs).

We finally remark that our results and most results in related work apply to
the case of two competing opinions. An obvious direction for further research
is extending our results to the case of multiple opinions.

\newpage
\bibliographystyle{alpha}
\bibliography{esb}

\end{document}